\theoremstyle{plain}
\newtheorem{theorem}{Theorem}
\newtheorem{prop}{Proposition}
\newtheorem{lemma}{Lemma}
\newtheorem{rk}{Remark}
\theoremstyle{definition}
\newcommand\com[1]{}
\newcommand\E{\mathcal{E}}
\newcommand\op[1]{\mathop{\rm #1}\nolimits}
\newcommand\p{\partial}
\newcommand\R{{\mathbb R}}
\begin{document}

\title{Conformal geodesics are not variational in higher dimensions}

\author{Boris Kruglikov}
\address{Department of Mathematics and Statistics, UiT the Arctic University of Norway, Troms\o\ 9037, Norway.
\ E-mail: {\tt boris.kruglikov@uit.no}. }

 \begin{abstract}
Variationality of the equation of conformal geodesics is an important problem in geometry with applications to general relativity.
Recently it was proven that, in three dimensions, this system of equations for un-parametrized curves is
the Euler-Lagrange equations of a certain conformally invariant functional, while the
parametrized system in three dimensions is not variational. We demonstrate that variationality
fails in higher dimensions for both parametrized and un-parametrized conformal geodesics, indicating that
variational principle may be the selection principle for the physical dimension.
 \end{abstract}

\maketitle

\section{Formulation of the Problem and the Result}\label{S1}

Physical laws are often given by the {\em principle of minimal action\/}, written mathematically 
in the variational form $\delta_XL=0$. 
The inverse problem of the calculus of variations is to decide whether a
given system of differential equations is the Euler-Lagrange equation of a certain functional \cite{Dav,D,An}.

Recall that geodesics satisfy the variational principle both in parametrized form (action functional) and
in non-parametrized form (length functional). In this paper we restrict to first variations only, thus essentially
discussing whether a system of differential equations is a critical point of some functional.

This problem was in a focus for the so-called conformal geodesics equation, which arose in applications to the geometry
of spatial infinity 
in general relativity \cite{S,FS}, and it was discussed in mathematical terms in \cite{BE,DK}.
The parametrized equation of conformal geodesics of a (pseudo) Riemannian metric $g$ is the following
 \begin{equation}\label{BE0}
\nabla_ua=\frac{3g(u,a)}{|u|^2}a- \frac{3|a|^2}{2|u|^2}u +|u|^2P^\sharp u-2P(u,u)u,
 \end{equation}
where $u=\dot{x}$ is the velocity field along non-null curve $x:I\to M$, $a=\nabla_uu$ is the acceleration,
the norms are defined as $|u|^2=g(u,u)$, the scalar product as $v\cdot w=g(v,w)$,
and $P$ is the Schouten tensor
 $$
P=\frac1{n-2}\Bigl(\op{Ric}_g-\frac1{2(n-1)}R_g\cdot g\Bigr).
 $$
The tangential part of this equation is responsible for a choice of parametrization, while the normal
part gives the equation of un-parametrized conformal geodesic (it is reparametrization invariant):
  \begin{equation}\label{BE2}
u\times \nabla_ua=3\frac{u\cdot a}{|u|^2}u\times a-|u|^2P^\sharp u\times u.
 \end{equation}
(One may understand $\times$ as the wedge operation, but in 3D it can be identified via Hodge star of $g$
and the musical isomorphisms with the operation of vector product on $TM$.)

Recently it was proven independently in \cite{TM} and \cite{KMS} that in 3D the equation 
for un-parametrized conformal geodesics is variational; the corresponding Lagrangian is the torsion form $\tau\,ds$. 
There exists a canonical family of parametrizations of conformal geodesics, so-called projective parameter \cite{BE}.
Whatever parameter is used, the equation of parametrized conformal geodesics in 3D is however not variational \cite{KMS}.

The higher dimensional case has remained open. One could hope that a higher dimensional version of the torsion functional 
(or any other conformally invariant Lagrangian) could solve the problem. That was verified in negative in \cite{KSS} 
in the flat 4D case, using an invariant version of the variational bicomplex. The purpose of this note is to demonstrate 
the negative answer to this problem in full generality.

 \begin{theorem}
Let $(M^n,[g])$ be a conformal manifold. 
If its dimension $n>3$, then
the equation for conformal geodesics is not variational in the classical sense neither in parametrized nor 
in unparametrized form.
 \end{theorem}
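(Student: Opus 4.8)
The plan is to read ``variational in the classical sense'' as the multiplier form of the inverse problem: a system $E^i=0$ is variational if there is a nondegenerate $g_{ij}(x,\dot x,\ddot x)$ for which $g_{ij}E^j$ is an Euler--Lagrange expression, equivalently for which the Fr\'echet derivative $D_{gE}$ is formally self-adjoint (the Helmholtz conditions). First I would put both systems into third-order form with unit principal part. Equation \eqref{BE0} reads $E^i=\dddot x^i-F^i(x,\dot x,\ddot x)=0$, where $F^i=F_0^i+\bigl(|u|^2P^\sharp u-2P(u,u)u\bigr)^i$ and $F_0^i=\tfrac{3(u\cdot a)}{|u|^2}a^i-\tfrac{3|a|^2}{2|u|^2}u^i$ is the part depending on the acceleration $a$. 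For \eqref{BE2} I would fix the reparametrization freedom by the projective parameter of \cite{BE} and restrict to the rank-$(n-1)$ normal bundle, obtaining a third-order system of $n-1$ equations, again with unit principal symbol.

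The decisive first step is a parity obstruction read off from the leading symbol. Comparing $D_{gE}=g\,\tfrac{d^3}{dt^3}+\cdots$ with its formal adjoint, the top-order Helmholtz condition is $g_{ij}=(-1)^3g_{ji}=-g_{ji}$: \emph{every} variational multiplier of a third-order system is skew-symmetric, in sharp contrast with the symmetric (metric) multiplier that makes the second-order geodesic equation variational. Since a nondegenerate skew form exists only in even rank, the parametrized system ($n$ equations) cannot be variational for $n$ odd -- which already reproves the $3$D statement of \cite{KMS} -- and the unparametrized system ($n-1$ equations) cannot be variational for $n$ even. The borderline $n=3$ is consistent: the normal bundle has rank $2$, where rotation by $\tfrac{\pi}{2}$ is the unique nondegenerate skew multiplier, and this is exactly the structure behind the torsion Lagrangian $\tau\,ds$ of \cite{TM,KMS}.

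What remains is the even rank $N\geq4$, which simultaneously covers the parametrized even-dimensional and the unparametrized odd-dimensional cases. Here I would pass to the sub-leading Helmholtz relations. Matching the coefficient of $\dddot x^l$ in $D_{gE}=D_{gE}^\ast$ already yields an $F$-free relation of the form $\partial_{\ddot x^j}g_{il}-\partial_{\ddot x^i}g_{jl}=3\,\partial_{\ddot x^l}g_{ij}$ on the skew multiplier, and the remaining relations couple $\partial_{\ddot x}g$ and $\partial_{\dot x}g$ to the homogeneity and trace structure of $F_0$. Crucially, the Schouten terms and the Christoffel corrections hidden in $\nabla_ua$ are either independent of $\ddot x$ or vanish at a point in conformal normal coordinates, so the whole $\ddot x$-sector of the Helmholtz conditions is curvature-free and identical to that of the flat model $\dddot x=F_0$. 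Because variationality is local and conformally invariant, centering the computation at an arbitrary point makes this sector conformal-class-independent, and reduces the decisive part of the problem to a single, metric-free question: do the self-adjointness relations for $\dddot x=F_0$ admit a nondegenerate skew multiplier?

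I expect this last computation to be the only genuine obstacle; everything preceding it is structural. The task is to show that the linear system for the skew tensor $g_{ij}(x,\dot x,\ddot x)$ -- organized by its equivariance under the orthogonal group acting on the jet variables -- is incompatible with $\det(g_{ij})\neq0$ as soon as $N\geq4$ (this contains the flat $4$D case of \cite{KSS} as the instance $N=4$), while it is solved by the covariantly constant area form when $N=2$. The mechanism I anticipate is that the special numerical coefficients $3$ and $-\tfrac{3}{2}$ in $F_0$ make the trace parts of the sub-leading relations over-determine $g$ in rank $\geq4$: they force the skew multiplier into a proper subspace on which it degenerates, exactly the phenomenon that cannot occur in the two-dimensional normal plane of the $3$D theory. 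Verifying this degeneration in each even rank $N\geq4$ -- equivalently, ruling out the curvature terms as a remedy -- completes the proof in full generality.
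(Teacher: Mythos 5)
Your framework is sound and your parity observation is exactly the one the paper uses: for a third-order system the leading Helmholtz condition forces the multiplier (equivalently the symbol matrix $\bigl[\p EL_j/\p x_2^i\bigr]$) to be skew-symmetric, which kills the parametrized case in odd $n$ and the unparametrized case in even $n$. But the heart of the theorem is the remaining even-rank case, and there your text contains no proof --- only an announced expectation (``I expect this last computation to be the only genuine obstacle'', ``the mechanism I anticipate\dots''). You give no derivation that the sub-leading Helmholtz relations are incompatible with a nondegenerate skew multiplier for $N\geq4$, and ``verifying this degeneration in each even rank'' is not a finite task as stated. The paper closes exactly this gap by (i) explicit computations in the flat $4$D and $5$D models, where splitting $\op{ev}_{\mathcal E}EL_j$ by second jets forces the Lagrangian down to first order, and (ii) an induction on dimension: hyperplanes in $\R^n$ are totally conformal geodesic, restriction of $L$ to a hyperplane intertwines the Euler--Lagrange operators, and since the Pfaffian obstruction alternates with parity one must drop the dimension by \emph{two} at each step to preserve nondegeneracy. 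Without some such reduction your plan does not terminate.

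The second gap is the passage from the flat model to arbitrary metrics. Your claim that ``the whole $\ddot x$-sector of the Helmholtz conditions is curvature-free'' is not justified: the Helmholtz conditions involve total derivatives $\tfrac{d}{dt}$, which reintroduce $\ddot x$-dependence through first derivatives of the Christoffel symbols, so choosing normal coordinates at a point does not decouple that sector. Moreover, even granting non-solvability for the flat metric, nothing in your argument rules out that curvature terms restore solvability for some special curved $[g]$. The paper needs two further ideas here: the variationality conditions form a \emph{linear finite-type} system on the Lagrangian components whose solvability is governed by the rank of a matrix $B_g$ depending algebraically on a finite jet of $g$ (giving the result for generic $g$ by semicontinuity), and then the observation that the jet of the flat metric lies in the unique closed orbit of the algebraic action of the differential group, so by lower semicontinuity of rank no metric can do better than the flat one. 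You would need to supply both of these (or equivalents) to claim the statement for \emph{all} conformal structures rather than an open set of them.
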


This may come as a surprise after a positive resolution of the problem in 3D \cite{TM,KMS} as well as a solution in any dimension with non-standard boundary conditions and specific class of variations \cite{DK}. 
Yet we see that in general the inverse problem of the calculus of variations in the classical setup has 
the negative solution. It is remarkable that the positive solution corresponds
to dimension 3, which has a direct physical interpretation.

\section{Warm-up: variational calculus and circles in three and two dimensions}

Let us consider curves in the Euclidean space $\R^3(x,y,z)$. Unparametrized (generic) curves are given by equation of the type $y=y(x)$, $z=z(x)$,
where $x$ can be seen as (technical) parameter, but which can be reparametrized. The curves have natural lifts to the jet-space
$J^3(\R,\R^2)=\R^9(x,y_0,z_0,y_1,z_1,y_2,z_2,y_3,z_3)$ and the equation for conformal geodesics (circles) 
determines a submanifold in this space
 \begin{equation}\label{CC3}
y_3=\frac{3y_2(y_1y_2+z_1z_2)}{1+y_1^2+z_1^2},\ z_3=\frac{3z_2(y_1y_2+z_1z_2)}{1+y_1^2+z_1^2}.
 \end{equation}
(In invariant terms this is $\kappa_s=0,\tau=0$ for the curvature $\kappa$, torsion $\tau$ and natural parameter $s$.)

For a Lagrangian $L=L(x,y_0,z_0,y_1,z_1,y_2,z_2,\dots)$ the Euler ($y$- and $z$-) operators are
 \begin{equation}\label{EL}
\frac{\delta L}{\delta y}=\frac{\p L}{\p y_0}-\frac{d}{dx}\Bigl(\frac{\p L}{\p y_1}\Bigr)+\frac{d^2}{dx^2}\Bigl(\frac{\p L}{\p y_2}\Bigr)-\dots,\qquad
\frac{\delta L}{\delta z}=\frac{\p L}{\p z_0}-\frac{d}{dx}\Bigl(\frac{\p L}{\p z_1}\Bigr)+\frac{d^2}{dx^2}\Bigl(\frac{\p L}{\p z_2}\Bigr)-\dots.
 \end{equation}
The Euler-Lagrange equations EL are
 \begin{equation}\label{ELeq}
\frac{\delta L}{\delta y}=0,\quad \frac{\delta L}{\delta z}=0.
 \end{equation}
If $L$ is a function on $J^2$ (second order Lagrangian) then no more terms appear in \eqref{EL}, so the 
Euler-Lagrange equation is of order $\leq4$, with equality for nondegenerate $L$. Equation \eqref{CC3} is 
however of order 3.

The inverse variational problem is to detect if the given equation, in our case \eqref{CC3}, is of form \eqref{ELeq} for some $L$.
Let us stress that we do not require that the first equation of \eqref{CC3} is identical with the first Euler operator in \eqref{EL}, 
etc, but the equations have to coincide as submanifolds in jets (inverse problem for differential operators is simpler 
and is solved via the so-called Helmholtz criterion \cite{An}).

For unparametrized curves the Lagrangian form $L\,dx$ should be reparametrization invariant, 
in particular $L$ does not explicitly depend on $x$. One can also assume, without loss of generality, 
that $L$ is independent of $y_0,z_0$, but this will not be important for us.

The Lagrangian for variational equation of order $k$ can be chosen to be of order $<k$: this is achieved
by modification of $L$ via divergence terms (integration by parts) and is known as Vainberg--Tonti method,
see e.g.\ \cite{Krp}. If $L$ is of order 2 then the coefficients of $y_4,z_4$ in EL are
$\frac{\p^2 L}{\p y_2^2}$, $\frac{\p^2 L}{\p y_2\p z_2}$, $\frac{\p^2 L}{\p z_2^2}$.
Since \eqref{CC3} is of the third order, these expressions have to vanish and we get

 $$
L=f(y_0,z_0,y_1,z_1)+h(y_0,z_0,y_1,z_1)y_2+r(y_0,z_0,y_1,z_1)z_2.
 $$
Using the fact, that the Lagrangian is defined up to divergence, i.e.\ that the transformation $L\mapsto L+\frac{d}{dx}f$
does not change the Euler-Lagrange equations, we can eliminate the coefficient $r\mapsto0$. Then we set our equations \eqref{CC3} into EL and split it by 2-jets $y_2,z_2$, receiving the following overdetermined PDE system:
 \begin{gather*}
y_1^2 h_{y_0,y_0} + 2y_1z_1 h_{y_0,z_0} + z_1^2 h_{z_0,z_0} =y_1 f_{y_0,y_1}+z_1 f_{z_0,y_1}-f_{y_0}, \
y_1 f_{y_0,z_1} +z_1 f_{z_0,z_1}= f_{z_0}, \\
f_{y_1,z_1} = h_{z_0}, \
f_{z_1,z_1} = 0, \
y_1 h_{y_0,z_1} +z_1 h_{z_0,z_1} =0, \
y_1 h_{y_0,y_1}+z_1 h_{z_0,y_1} =  f_{y_1,y_1} - 2h_{y_0}, \\
(1+y_1^2+z_1^2)h_{y_1,z_1} = -3y_1 h_{z_1}, \
(1+y_1^2+z_1^2) h_{z_1,z_1} = -3z_1 h_{z_1}.
 \end{gather*}
 
This system can be solved, and eliminating the remaining divergence freedom we get
 $$
L=\frac{z_1y_2}{(1+y_1^2)\sqrt{1+y_1^2+z_1^2}}.
 $$
This Lagrangian slightly simplifies the one found in formula (11) of \cite{KMS}.
Prior to this a third order Lagrangian for conformal circles in Euclidean space was established in \cite{BF1}.
Note that only this latter Lagrangian (torsion) 
is reparametrization-invariant.
The others have this property modulo divergence.

\smallskip

The inverse variational problem for parametrized curves is set up similarly in the formalism of $J^2(\R,\R^3)=\R^{10}(t,x_0,y_0,z_0,x_1,y_1,z_1,x_2,y_2,z_2)$ but the corresponding problem is resolved in negative, see \cite{KMS}.

\medskip

We can also explore a two-dimensional analog of \eqref{CC3} though in dimension 2 it should not be called the equation of
conformal geodesics (because conformal structures are uniformized; their analog in 2D should be rather called a M\"obius structure).
The unparametrized version has the form
 \begin{equation}\label{CC2}
y_3=\frac{3y_1y_2^2}{1+y_1^2}.
 \end{equation}
(In invariant terms this is $\kappa_s=0$ for the curvature $\kappa$ and natural parameter $s$, i.e.\ $\kappa=\op{const}$.)

If $L=L(x,y_0,y_1,y_2)$ then the coefficient of $y_4$ in EL is $\frac{\p^2 L}{\p y_2^2}$, so $L=u(x,y_0,y_1)+v(x,y_0,y_1)y_2$
but the Euler-Lagrange equations for such $L$ have order 2. Thus \eqref{CC2} is not variational.

\smallskip

Considering the equation for parametrized circles (see also Remark \ref{rk1} later)
 $$
x_3 = -(x_2^2+y_2^2)x_1,\quad y_3 = -(x_2^2+y_2^2)y_1,
 $$
or in conformal/M\"obius invariant way (projective parameter)
 $$
x_3=3\frac{x_1x_2+y_1y_2}{x_1^2+y_1^2}x_2 - \frac32\frac{x_2^2+y_2^2}{x_1^2+y_1^2}x_1,\quad
y_3=3\frac{x_1x_2+y_1y_2}{x_1^2+y_1^2}y_2 - \frac32\frac{x_2^2+y_2^2}{x_1^2+y_1^2}y_1.
 $$
We start with the Lagrangian $L=L(t,x_0,y_0,x_1,y_1,x_2,y_2)$ and then the condition that EL is of order $<4$ 
and simplification modulo divergences implies that we can take $L=f(t,x_0,y_0,x_1,y_1)+h(t,x_0,y_0,x_1,y_1)y_2$. 
However one of the EL equations is $h_{x_1}=0$, so $h=h(t,x_0,y_0,y_1)$ and hence a change by a divergences can bring 
the Lagrangian to the form $L=f(t,x_0,y_0,x_1,y_1)$, for which the Euler-Lagrange equation is of order at most 2. This proves 
that in dimension 2 the analog equation for (conformal) circles (both in parametric and unparametric form) is not variational.

\section{The inverse problem for conformal circles 
in four and five dimensions}\label{S4D}

Let us start with the equation for parametrized conformal geodesics. In the flat 4D space 
$\R^4$ with coordinates $x^1=x$, $x^2=y$, $x^3=z$, $x^4=w$ formula \eqref{BE0} has the following form
 \begin{equation}\label{BE0D4}
\dddot{x}^i=3\frac{\langle u,a\rangle}{|u|^2}\ddot{x}^i-\frac32\frac{|a|^2}{|u|^2}\dot{x}^i,\quad 1\leq i\leq4,
 \end{equation}
where dot denotes derivative by the parameter $t$ along the curve and 
$|u|^2=\sum_1^4(\dot{x}^i)^2$, $\langle u,a\rangle=\sum_1^4\dot{x}^i\ddot{x}^i$, $|a|^2=\sum_1^4(\ddot{x}^i)^2$.
We will also denote $x_0=x$, $x_1=\dot{x}$, $x_2=\ddot{x}$, $x_3=\dddot{x}$, and similar for $y,z,w$.

Using the Vainberg--Tonti method, we can choose 
$L=L\bigl(x_0^j,x_1^j,x_2^j:1\leq j\leq 4\bigr)$; the general Lagrangian may also depend on $t$ but this
does not change the argument below and so will be omitted.
Denote the Euler-Lagrange operator 
 \begin{equation}\label{VT}
EL_j=\frac{\delta L}{\delta x^j}=\frac{\p L}{\p x^j}-\frac{d}{dt}\frac{\p L}{\p x^j_1}+\frac{d^2}{dt^2}\frac{\p L}{\p x^j_2}.
 \end{equation}
Variationality means that $EL_j=0$ is equivalent to \eqref{BE0D4}.
The coefficients of $x^k_4$ in $\op{EL}_j$ are $\frac{\p^2f}{\p x^a_2\p x^b_2}=0$, whence
 \begin{equation}\label{L4a}
L=f_0\bigl(x_0^j,x_1^j:1\leq j\leq 4\bigr)+\sum_{k=1}^4f_k\bigl(x_0^j,x_1^j:1\leq j\leq 4\bigr)x^k_2.
 \end{equation}
This Lagrangian is degenerate in the standard sense, however our goal is to check whether the Euler-Lagrange equations are equivalent
to third order system \eqref{BE0D4}, so we call $L$ nondegenerate if the Jacobian $\Bigl[\p EL_j/\p x^k_3\Bigr]$ is invertible.
The determinant of this matrix is the square of the following quadratic expression
 \begin{multline}\label{ndg4D}
f_{1;3}f_{2;4}-f_{1;4}f_{2;3}+f_{1;4}f_{3;2}-f_{1;2}f_{3;4}+f_{1;2}f_{4;3}-f_{1;3}f_{4;2}\\
+f_{2;1}f_{3;4}-f_{2;4}f_{3;1}+f_{2;3}f_{4;1}-f_{2;1}f_{4;3}+f_{3;1}f_{4;2}-f_{3;2}f_{4;1}
 \end{multline}
which thus should be nonzero. Here $f_{i;j}$ denotes the partial derivative of $f_i$ by $x^j_1$.
If we freeze 0-jet variables $x^j_0$ getting $df_i=\sum_{j=1}^4f_{i;j}dx^j_1$, then
the nonvanishing of expression \eqref{ndg4D} is equivalent to nonvanishing of the following 4-form, 
where the indices $ijkl$ in the summation run over even permutations of $1234$:
 \begin{equation}\label{dfdx}
\sum df_i\wedge df_j\wedge dx_1^k\wedge dx_1^l\neq0.
 \end{equation}
 
Assuming this is the case, variationality of the equation $\mathcal{E}$ given by \eqref{BE0D4} is equivalent to the condition that $EL_j$ evaluated on 
\eqref{BE0D4} are identically zero. In what follows we denote this latter operation by $\op{ev}_{\mathcal{E}}EL_j$.

We now derive consequences of this constraint. Let us first modify $L$ in \eqref{L4a} by a divergence to eliminate the term 
$f_4\bigl(x_0^j,x_1^j:1\leq j\leq 4\bigr)x^4_2$. For this choose function $\psi$ such that $\p_{w_2}\psi=f_4$ and change $L\mapsto L-\frac{d}{dt}\psi$.
We obtain $L$ of form \eqref{L4a} but with summation from $k=1$ to $k=3$.

Splitting $\op{ev}_{\mathcal{E}}EL_j$ by $x^k_2$ we obtain many conditions, among which the following are 
the simplest (we now use the alternative notations for $x^j_i$ as in the beginning of this section):
 \begin{gather*}
\frac{\p^2f_0}{\p w_1\p w_1}=0,\quad 
3w_1\frac{\p f_1}{\p w_1}+(x_1^2+y_1^2+z_1^2+w_1^2)\frac{\p^2 f_1}{\p w_1\p w_1}=0,\quad
3x_1\frac{\p f_1}{\p w_1}+(x_1^2+y_1^2+z_1^2+w_1^2)\frac{\p^2 f_1}{\p x_1\p w_1}=0,
 \end{gather*}
and similarly for $f_2,f_3$.
This implies the following form of the Lagrangian where we let ${\bf v}_0=(x_0,y_0,z_0,w_0)$:
 \begin{multline*}
L= f_{00}({\bf v}_0,x_1,y_1,z_1)+f_{01}({\bf v}_0,x_1,y_1,z_1)w_1+f_{10}({\bf v}_0,x_1,y_1,z_1)x_2
+\frac{f_{11}({\bf v}_0,y_1,z_1)w_1x_2}{(x_1^2+y_1^2+z_1^2)\sqrt{x_1^2+y_1^2+z_1^2+w_1^2}}\\
+f_{20}({\bf v}_0,x_1,y_1,z_1)y_2
+\frac{f_{21}({\bf v}_0,x_1,z_1)w_1y_2}{(x_1^2+y_1^2+z_1^2)\sqrt{x_1^2+y_1^2+z_1^2+w_1^2}}\\
+f_{30}({\bf v}_0,x_1,y_1,z_1)z_2
+\frac{f_{31}({\bf v}_0,x_1,y_1)w_1z_2}{(x_1^2+y_1^2+z_1^2)\sqrt{x_1^2+y_1^2+z_1^2+w_1^2}}.
 \end{multline*}
 
Next, splitting $\op{ev}_{\mathcal{E}}EL_j$ by $x^k_2$ and $w_1$ we obtain conditions, among which 
the following are the simplest:
 \[
\frac{\p f_{01}}{\p x_1}=\frac{\p f_{10}}{\p w_0},\quad 
\frac{\p f_{01}}{\p y_1}=\frac{\p f_{20}}{\p w_0},\quad 
\frac{\p f_{01}}{\p z_1}=\frac{\p f_{30}}{\p w_0},\quad
\frac{\p f_{10}}{\p y_1}=\frac{\p f_{20}}{\p x_1},\quad 
\frac{\p f_{10}}{\p z_1}=\frac{\p f_{30}}{\p x_1},\quad 
\frac{\p f_{20}}{\p z_1}=\frac{\p f_{30}}{\p y_1},
 \]
 \begin{gather*}
\frac{\p f_{11}}{\p w_0}=\frac{\p f_{21}}{\p w_0}=\frac{\p f_{31}}{\p w_0}=0,\quad
\frac{\p f_{11}}{\p y_1}+\frac{\p f_{21}}{\p x_1}=0,\quad 
\frac{\p f_{11}}{\p z_1}+\frac{\p f_{31}}{\p x_1}=0,\quad 
\frac{\p f_{21}}{\p z_1}+\frac{\p f_{31}}{\p y_1}=0,\\
x_1f_{11}+y_1f_{21}+z_1f_{31}=0.
 \end{gather*}
From these conditions we further simplify the Lagrangian as follows
 \begin{multline*}
L= f_{00}({\bf v}_0,x_1,y_1,z_1)+\p_{w_0}\psi({\bf v}_0,x_1,y_1,z_1)w_1
+\p_{x_1}\psi({\bf v}_0,x_1,y_1,z_1)x_2\\
+\p_{y_1}\psi({\bf v}_0,x_1,y_1,z_1)y_2
+\p_{z_1}\psi({\bf v}_0,x_1,y_1,z_1)z_2
+\frac{h_3(x_0,y_0,z_0)y_1-h_2(x_0,y_0,z_0)z_1}{(x_1^2+y_1^2+z_1^2)\sqrt{x_1^2+y_1^2+z_1^2+u_1^2}} w_1x_2\\
+\frac{h_1(x_0,y_0,z_0)z_1-h_3(x_0,y_0,z_0)x_1}{(x_1^2+y_1^2+z_1^2)\sqrt{x_1^2+y_1^2+z_1^2+u_1^2}} w_1y_2
+\frac{h_2(x_0,y_0,z_0)x_1-h_1(x_0,y_0,z_0)y_1}{(x_1^2+y_1^2+z_1^2)\sqrt{x_1^2+y_1^2+z_1^2+u_1^2}} w_1z_2.
 \end{multline*}
Changing this by divergence $L\mapsto L-\frac{d}{dt}\psi$ we once more simplify the Lagrangian as follows:
 \begin{multline*}
L= h_0({\bf v}_0,x_1,y_1,z_1)
+\frac{h_3(x_0,y_0,z_0)y_1-h_2(x_0,y_0,z_0)z_1}{(x_1^2+y_1^2+z_1^2)\sqrt{x_1^2+y_1^2+z_1^2+u_1^2}} w_1x_2\\
+\frac{h_1(x_0,y_0,z_0)z_1-h_3(x_0,y_0,z_0)x_1}{(x_1^2+y_1^2+z_1^2)\sqrt{x_1^2+y_1^2+z_1^2+u_1^2}} w_1y_2
+\frac{h_2(x_0,y_0,z_0)x_1-h_1(x_0,y_0,z_0)y_1}{(x_1^2+y_1^2+z_1^2)\sqrt{x_1^2+y_1^2+z_1^2+u_1^2}} w_1z_2.
 \end{multline*}
 
Now again splitting $\op{ev}_{\mathcal{E}}EL_j$ by $x^k_2$ and $w_1$ we observe the following condition
 $$
x_1h_1+y_1h_2+z_2h_3=0, 
 $$ 
which yields $h_1=h_2=h_3=0$. Thus $L=h_0(x_0,y_0,z_0,u_0,x_1,y_1,z_1)$ is of first order,
so the corresponding Euler-Lagrange equation cannot give ODEs of the third order.
We conclude that the parametrized conformal geodesics in conformally flat 4D space are not variational.

 \begin{rk}\label{rk1}
Solutions of the parametrized conformal circles equation in any dimensions \eqref{BE0}
are obtained by rotating the motion plane to the first coordinate plane, where we get rational projective parametrization
 $$
x(t)= p+\frac{\alpha (t-r)+\beta\gamma}{(t-r)^2+\gamma^2},\quad y(t) = q+\frac{\alpha\gamma-\beta(t-r)}{(t-r)^2+\gamma^2}.  
 $$
Note that this projective parameter is the solution to a third order ODE: the scalar product of \eqref{BE0} with $u$
gives $(u,\dot{a})=3|u|^{-2}(u,a)^2-\tfrac32|a|^2$.

\smallskip
 
Another parametrization (in the conformally flat space given by the metric of nonzero constant curvature; see e.g.\ \cite{KMS}) 
is given by the equation
 $$
\nabla_ua=-|a|^2u 
 $$
for the constrained problem $|u|^2=1$, $u\cdot a=0$. In this case the solution is obtained by rotating the
motion plane to the first coordinate plane, where we get periodic parametrization
 \begin{equation}\label{abcd}
x(t)= \alpha+\gamma^{-1}\cos\bigl(\gamma(t-r)\bigr),\quad y(t) = \beta+\gamma^{-1}\sin\bigl(\gamma(t-r)\bigr).  
 \end{equation}
This parametrized equation is also non-variational, as can be checked by the same method as above.
 \end{rk}
 
Now consider un-parametrized conformal geodesic equation \eqref{BE2} in 4D. 
For this the Lagrangian form $L\,dt$ should be reparametrization invariant, so we choose the 
coordinate $w$ as local time, and $x^1=x$, $x^2=y$, $x^3=z$ will depend on it.
Thus $L=L(x_i,y_i,z_i)$ where the derivatives are taken wrt $w$. 
As before, the Lagrangian can be chosen of order $2$, i.e.\ $0\leq i\leq2$ in the expression for $L$,
and it should be quasi-linear, i.e.\ $L=f_0+\sum_{k=1}^3f_kx^k_2$, where $f_i=f_i(x_0,y_0,z_0,x_1,y_1,z_1)$.
This implies for $1\leq j,k\leq3$
 $$
\frac{\p EL_j}{\p x_2^k}=\frac{\p f_j}{\p x^k_1}-\frac{\p f_k}{\p x^j_1}.
 $$
Thus the $3\times3$ symbol matrix of the Euler-Lagrange equation is skew-symmetric and hence degenerate.
Therefore it cannot reproduce the un-parametrized conformal geodesic equation.

 \begin{rk}
The last argument is borrowed from the proof of Proposition 2 from \cite{KMS}. In a similar vein,
the same argument implies, right away, the non-variationality for parametrized conformal geodesics 
in odd dimensions and for un-parametrized conformal geodesics in even dimensions.
 \end{rk}
 
In 5D the equation for parametrized conformal geodesics is trivially non-variational, as follows from the
above remark. Un-parametrized conformal geodesics are given by the following system
(where the bracket $[\cdot\cdot]$ stands for skew-symmetrization over the indicated indices)
 \begin{equation}\label{BE2D5}
\dot{x}^{[i}\dddot{x}^{j]}=3\frac{\langle u,a\rangle}{|u|^2}\dot{x}^{[i}\ddot{x}^{j]},\quad 
0\leq i<j\leq4.
 \end{equation}
Choosing $x^0=t$ as parameter we get the system (the indices are lowered via the Euclidean metric)
 \begin{equation}\label{BE3D4}
\dddot{x}^j=\frac{3\dot{x}_i\ddot{x}^i}{1+\dot{x}_i\dot{x}^i}\ddot{x}^j,\quad 
1\leq j\leq4.
 \end{equation}
Applying the  same method as for parametrized conformal geodesics in 4D, one can verify that this 
equation is also non-variational. These tedious computations are fully similar, so we omit the details.

\section{Reduction and Proof in the general case}\label{SgD}

Now we are ready to prove the main theorem. The idea is to reduce the claim
to the simplest cases, considered before, both in dimension and symmetry-wise. The proof will be split in steps.

\medskip
 
{\bf 0. General signature.} In the computations above and below we use Riemannian signature of $g$.
When metric $g$ is analytic, this is equivalent to the claim on variationality for any signature of the conformal structure
$[g]$ (recall that the conformal geodesics are non-null curves $|u|^2=g(u,u)\neq0$).
It is easy to see that if in one domain of sign choices the curves are locally variational,
then they are such in complexification and hence (Wick rotation) variationality holds
in other real versions of $g$ and other choices of sign for $|u|^2$.

Actually, one does not need analyticity of $g$ in this argument because the variationality is decided on the level 
of finite jets, hence one can work formally. Thus we can assume $g$ smooth, without loss of generality.

Indeed, the problem is to find solvability of the system $\delta_XL=0$ restricted to the equation $\mathcal{E}$
given by \eqref{BE0} or \eqref{BE2}, respectively, for a second order Lagrangian affine in 2-jets $x_2^i$, 
together with the non-degeneracy condition 
 \begin{equation}\label{ndg}
\det\Bigl[\frac{\p EL_j}{\p x^i_2}\Bigr]\neq0
 \end{equation}
that has differential-algebraic form, cf.\ \eqref{ndg4D}.
Splitting by jets the equation $\op{ev}_\mathcal{E}EL_j$ we get an overdetermined system, which is algebraic in jets
and hence allows complexification. Involutivity of this system with relation \eqref{ndg} is an algebraic condition,
not involving any real terms.

\medskip

{\bf 1. Parametrized flat structures in any dimension.} We claim that the equation for parametrized conformal 
geodesics in the Euclidean space of dimension $n>3$ is non-variational. 
For $n=3$ this was demonstrated in \cite{KMS},
for $n=4$ was proven in Section \ref{S4D} and also confirmed by invariant calculus in \cite{KSS}. 

We proceed by induction. Note that hyperplanes in $\R^n$ are totally conformal geodesic:
every conformal geodesic starting on the point of the hyperplane $\Pi\subset\R^n$ with velocity and acceleration 
in $\Pi$ remain in $\Pi$ for all times. This is easy to see as the curves are just projectively parametrized circles.
Also note that substitution of $x^n_k=0$ for $0\leq k\leq 3$ into \eqref{BE0} reduces the conformal
circle equation in $\R^n$ to that in $\R^{n-1}$.

Using a rotation $R\in SO(n)$ and a translation $T\in\R^n$ we represent $\Pi=RT(\Pi_0)$ for $\Pi_0=\{x^n=0\}$.
Assume, on the contrary, that the conformal geodesic equation in $\R^n$ is variational.
 \begin{lemma}
Let $L_0=L|_{x^n_k=0}$ be the restriction. Then the Euler-Lagrange equations $\frac{\delta L}{\delta x^j}=0$
for $1\leq j<n$ restricted to $x^n_k=0$ coincide with the Euler-Lagrange equations for $L_0$.
 \end{lemma}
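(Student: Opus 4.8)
The plan is to prove the Lemma by showing that, for $1\le j<n$, the Euler-Lagrange operator $EL_j$ commutes with the restriction $|_{x^n_k=0}$. The operator in \eqref{VT} is assembled from two elementary building blocks: the partial derivatives $\p/\p x^j_i$ along the fibre coordinates, and the total derivative $\frac{d}{dt}=\sum_{m=1}^n\sum_{i\ge0}x^m_{i+1}\,\p/\p x^m_i$. It therefore suffices to analyse how each block interacts with the substitution $x^n_k=0$ and then to reassemble \eqref{VT} term by term.

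First I would record that for $j<n$ the partial derivative $\p/\p x^j_i$ commutes with freezing $x^n_k=0$, simply because $x^j_i$ and $x^n_k$ are independent jet coordinates, so differentiating in one direction is unaffected by setting the other to zero. This gives $\bigl(\p L/\p x^j_i\bigr)|_{x^n_k=0}=\p L_0/\p x^j_i$ for every order $i$, where $L_0=L|_{x^n_k=0}$.

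The crucial step is the behaviour of the total derivative. Writing $\frac{d}{dt}\bigl|_0=\sum_{m<n}\sum_{i\ge0}x^m_{i+1}\,\p/\p x^m_i$ for the total derivative on the reduced jet space of $\R^{n-1}$, I claim that for any smooth function $\Phi$ on jets one has $\bigl(\frac{d}{dt}\Phi\bigr)\bigl|_{x^n_k=0}=\frac{d}{dt}\bigl|_0\bigl(\Phi|_{x^n_k=0}\bigr)$. Indeed, restriction annihilates exactly the terms $x^n_{i+1}\,\p\Phi/\p x^n_i$, since their coefficients $x^n_{i+1}$ vanish on $x^n_k=0$, while the surviving terms with $m<n$ restrict by the commutation of the previous paragraph. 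Applying this twice yields $\bigl(\frac{d^2}{dt^2}\Phi\bigr)\bigl|_{x^n_k=0}=\frac{d^2}{dt^2}\bigl|_0\bigl(\Phi|_{x^n_k=0}\bigr)$.

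Combining the two observations in \eqref{VT} gives $EL_j|_{x^n_k=0}=\p L_0/\p x^j-\frac{d}{dt}\bigl|_0\,\p L_0/\p x^j_1+\frac{d^2}{dt^2}\bigl|_0\,\p L_0/\p x^j_2$, which is precisely the Euler-Lagrange equation $\delta L_0/\delta x^j$ formed on the reduced jet space, proving the Lemma. I expect the only delicate point to be the total-derivative step: the argument hinges on \emph{all} jets $x^n_k$ (up to the order $4$ occurring in the second-order $EL_j$) being set to zero, so that every discarded coefficient $x^n_{i+1}$ genuinely vanishes; had only finitely many of the lower $x^n_k$ been frozen, the off-diagonal terms would survive and the commutation would break. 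The restriction $j<n$ is likewise essential, since $\p/\p x^n_i$ does \emph{not} commute with $x^n_k=0$, which is exactly why the statement is confined to $1\le j<n$.
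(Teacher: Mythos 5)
Your proof is correct and is essentially the paper's argument made explicit: the paper compresses your key step into the remark that the constraint ideal $\{x^n_k=0:\forall k\ge0\}$ is closed under $\tfrac{d}{dt}$ (since $\tfrac{d}{dt}x^n_k=x^n_{k+1}$ lies in the ideal), which is exactly your observation that the discarded terms of the total derivative carry vanishing coefficients $x^n_{i+1}$. Your added emphasis on needing \emph{all} orders $k$ frozen, and on why $j<n$ is essential, matches the paper's intent precisely.
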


 \begin{proof}
From \eqref{VT} we see that the restriction of $EL_j$ for $j<n$ to the constraint $\{x^n_k=0:\forall k\geq0\}$
is equivalent to the corresponding Euler-Lagrange equation with restricted $L$: this follows from 
$\tfrac{d}{dt}$ closedness of the ideal describing the constraint.
 \end{proof}

Note however that nondegeneracy condition \eqref{ndg} for $L$ fails in odd dimensions.
Indeed, the determinant in \eqref{ndg} is the square of the Pfaffian of the skew-symmetric matrix
$\bigl[\tfrac{\p EL_j}{\p x^i_2}\bigr]$ and in odd dimensions this always vanishes. 
Thus one can make a reduction by two dimensions, when restriction yields nonzero Pfaffian,
and complete the induction step based on non-variationality of the parametrized conformal circles
in dimensions $n=3,4$ (in fact, one could start at $n=2$ but this is not conformal but M\"obius geometry):

 \begin{prop}
The equation for parametrized conformal circles in $(\R^n\!,ds^2_{\op{Eucl}})$ is not variational for $n\ge3$.
 \end{prop}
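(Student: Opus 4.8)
The plan is to induct on the dimension, lowering it by two at each step and anchoring the induction on the cases already in hand: $n=3$ from \cite{KMS} and $n=4$ from Section \ref{S4D}. In every step I assume, for contradiction, that the parametrized equation \eqref{BE0D4} in $\R^n$ is variational. By the Vainberg--Tonti reduction the Lagrangian is second order and affine in the $2$-jets, $L=f_0+\sum_k f_k x_2^k$ with $f_i=f_i(x_0^j,x_1^j)$, and because \eqref{BE0D4} is a genuine third order system (it solves for $x_3^i$ as a graph over the lower jets) the Euler--Lagrange equations can cut out the same submanifold in jets only if the symbol is invertible, that is, only if the nondegeneracy \eqref{ndg} holds.

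I would first dispose of all odd dimensions at once. As recorded just before the statement, the symbol matrix $\bigl[\p EL_j/\p x^i_2\bigr]$ is skew-symmetric and the left-hand side of \eqref{ndg} is the square of its Pfaffian, cf.\ \eqref{ndg4D}. For $n$ odd a skew-symmetric $n\times n$ matrix has vanishing Pfaffian, so \eqref{ndg} fails identically: no nondegenerate Lagrangian exists and the third order equation \eqref{BE0D4} cannot be variational. In particular this reproves the base case $n=3$.

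It remains to treat even $n\ge 6$, and here I would use the Euclidean (in particular $SO(n)$) invariance of \eqref{BE0D4} together with the restriction Lemma. If $L$ reproduces the equation then so does the rotated Lagrangian $L^R=L\circ j^2(R)$ for any $R\in SO(n)$, and $L^R$ is again affine in $2$-jets since $R$ is linear. Because hyperplanes are totally conformal geodesic, applying the Lemma to the two nested hyperplanes $\{x^n_k=0\}$ and $\{x^{n-1}_k=0\}$ shows that the restriction $L_0=L^R|_{\{x^{n-1}_k=x^n_k=0\}}$ has Euler--Lagrange equations coinciding with the parametrized conformal geodesic equation in $\R^{n-2}$. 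The symbol of $L_0$ is the $(n-2)\times(n-2)$ principal block of the skew-symmetric symbol of $L^R$, and the crux is to choose $R$ so that this block is nondegenerate. Granting this, $L_0$ is a nondegenerate Lagrangian reproducing conformal geodesics in $\R^{n-2}$ (even, $\ge 4$), contradicting the induction hypothesis and closing the step.

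The main obstacle is precisely the selection of the rotation. A nondegenerate skew form on $\R^n$ restricts nondegenerately to the generic subspace of codimension two, and $R$ is exactly the freedom needed to place such a subspace at $\{x^{n-1}_k=x^n_k=0\}$; the ``bad'' subspaces form a proper algebraic subset, so a suitable $R$ exists pointwise. The real work is to make the choice uniform, since the symbol depends on the $1$-jets $x_1^j$. I would handle this by noting that \eqref{ndg} holds on an open dense set of jets and that nonvanishing of the restricted Pfaffian is again an algebraic condition, so a single generic $R$ keeps the $(n-2)$-block nondegenerate on an open set of jets, which is all that is needed. Two routine verifications accompany the argument: that $j^2(R)$ preserves the affine-in-$2$-jets normal form, and that the double restriction indeed lands on the $\R^{n-2}$ equation --- the latter following, as in the Lemma, from $\tfrac{d}{dt}$-closedness of the ideal cutting out the constraint.
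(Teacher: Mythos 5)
Your proposal is correct and follows essentially the same route as the paper: odd dimensions are eliminated at once because the skew-symmetric symbol has vanishing Pfaffian, and even dimensions are handled by induction, restricting the Lagrangian to a totally conformal geodesic codimension-two plane (via the restriction Lemma) positioned by a rotation so that the restricted Pfaffian survives, anchored at the cases $n=3,4$. The only difference is that you spell out the genericity argument for choosing the rotation, a point the paper leaves implicit in the phrase ``when restriction yields nonzero Pfaffian.''
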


\medskip

{\bf 2. Un-parametrized flat structures in any dimension.} 
Here the reduction is very similar, and we again use totally conformal geodesic property.
This gives a reduction of dimensions in the flat Euclidean case. An essential difference is that
the nondegeneracy condition
 \begin{equation}\label{open}
\op{Pf}\Bigl[\frac{\p EL_j}{\p x^i_2}\Bigr]\neq0
 \end{equation}
generally holds in odd dimensions and always fails in even dimensions. 
To get proper induction one should get base of induction for $n=4,5$, which was resolved in Section \ref{S4D}. 

Since investigation of the equation for un-parametrized conformal circles in 5D was only briefly discussed there,
let us present an important modification. Nondegeneracy condition \eqref{open} signifies equivalence
of the Euler-Lagrange equations to \eqref{BE2}. Yet, one may consider overdetermined system 
$\op{ev}_\mathcal{E}EL_j$ split by $x^i_2$ even without this inequality. 
An important step in simplification was change by divergence $L\mapsto L+\tfrac{d}{dt}\psi$,
but this can be done also when the order of $L$ is below 3, that is, without restriction to $\mathcal{E}$.
In this case the Euler-Lagrange equation holds iff $L$ is a pure divergence, so it reduces to 0.

This has been directly verified for $n=5$. Thus it finishes the induction step and proves
 \begin{prop}
\!The equation  for unparametrized conformal circles in $(\R^n\!\!,ds^2_{\op{Eucl}})$ is not variational for $n>3$.\!\!
 \end{prop}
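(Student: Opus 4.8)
The plan is to argue by induction on $n$, exactly in the spirit of the parametrized case, but paying close attention to the parity obstruction for the Pfaffian. The base cases $n=4$ and $n=5$ are established in Section~\ref{S4D}, so I would begin the induction at $n=6$. The geometric input is the totally-geodesic property of coordinate hyperplanes: every unparametrized conformal circle in $\R^n$ tangent to a hyperplane $\Pi\subset\R^n$ (with velocity and acceleration in $\Pi$) stays in $\Pi$, and substituting $x^n_k=0$ for $0\le k\le 3$ reduces equation \eqref{BE2} in $\R^n$ to the same equation in $\R^{n-1}$. Combined with the Lemma (restriction of $EL_j$ to $\{x^n_k=0\}$ for $j<n$ agrees with the Euler--Lagrange operator of the restricted Lagrangian $L_0=L|_{x^n_k=0}$), this shows that if the equation in $\R^n$ were variational via some second-order affine $L$, then $L_0$ would be a Lagrangian for the $\R^{n-1}$ equation.

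The first key step is therefore to set up the induction so that each reduction step lowers the dimension by exactly one, while never relying on nondegeneracy in a dimension where the Pfaffian is forced to vanish. Since \eqref{open} fails identically in even dimensions (the symbol matrix $[\p EL_j/\p x^i_2]$ is skew-symmetric of odd size after removing the parameter direction, or even size with vanishing Pfaffian), the naive use of the nondegeneracy condition as the certificate of equivalence breaks down on half the dimensions. The second, and crucial, step is the \emph{modification} already flagged in the excerpt: instead of demanding the invertibility \eqref{open}, I would work directly with the overdetermined system obtained by evaluating $\op{ev}_{\mathcal{E}}EL_j$ and splitting it by the $2$-jets $x^i_2$, without assuming the inequality. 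The divergence reduction $L\mapsto L+\tfrac{d}{dt}\psi$ is still available even when $\op{ord} L<3$, i.e.\ without restricting to $\mathcal{E}$, and in that regime the Euler--Lagrange equation vanishes identically precisely when $L$ is a pure divergence, hence reduces to $0$. This is what lets the argument bypass the parity obstruction.

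The heart of the induction is then as follows. Assume, for contradiction, that \eqref{BE2} is variational in $\R^n$ with $n\ge 6$. Pick the coordinate hyperplane $\Pi_0=\{x^n=0\}$ and, if needed, a further coordinate hyperplane so as to descend by the right parity; by the Lemma the restricted Lagrangian is a Lagrangian for the equation in the lower dimension. If we land in a dimension $m$ with $m>3$ where the inductive hypothesis already gives non-variationality, we must show that non-variationality of the restricted equation forces $L$ itself to be trivial (of order $<3$ or a divergence), contradicting that it reproduces a genuine third-order system in $\R^n$. The technical content is to verify that the splitting conditions on $\op{ev}_{\mathcal{E}}EL_j$ propagate across the reduction: the vanishing of the lower-dimensional Euler--Lagrange operator (equivalently, the lower-dimensional $L_0$ being a divergence) must be shown to annihilate the corresponding blocks of the $\R^n$ Lagrangian. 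The combination of these facts forces $L$ to be first order, so its Euler--Lagrange equation has order at most two and cannot equal the third-order system~\eqref{BE2}.

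The main obstacle I expect is precisely the bookkeeping at the even/odd interface. In even dimensions the symbol is degenerate, so one cannot read off equivalence from \eqref{open}; the whole argument must instead rest on the divergence-reduction mechanism and on tracking which coefficient functions $f_i$ survive after each restriction. Concretely, the hard part is to confirm that the restriction-and-split procedure, iterated down to the base cases $n=4,5$, does not acquire spurious solutions from the degenerate directions---that is, that every Lagrangian compatible with \eqref{BE2} in $\R^n$ restricts to one that the base case has already excluded, and that lifting this exclusion back up (via the divergence argument rather than via nondegeneracy) indeed kills the higher-dimensional $L$. Once this compatibility is checked, the two base cases $n=4,5$ of Section~\ref{S4D} close both parity classes and the induction is complete.
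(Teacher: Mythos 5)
Your overall strategy --- induction via restriction to totally geodesic hyperplanes, base cases $n=4,5$ from Section~\ref{S4D}, and the divergence mechanism in place of the nondegeneracy condition \eqref{open} --- is the paper's. But the induction step as you formulate it has a gap. Your inductive hypothesis is ``the equation in $\R^m$ is not variational,'' and you propose to derive a contradiction from the fact that the restricted Lagrangian $L_0$ is a Lagrangian for the lower-dimensional equation. What the Lemma actually gives is only that $\op{ev}_{\E'}EL(L_0)\equiv0$ for the lower-dimensional equation $\E'$; it does not say that the Euler--Lagrange system of $L_0$ is \emph{equivalent} to $\E'$, which is what non-variationality excludes. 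In particular, after a single hyperplane restriction from odd $n$ you land in even dimension $m=n-1$, where the symbol of $L_0$ is a skew-symmetric matrix of odd size $m-1$ and hence automatically degenerate; so $L_0$ can never be an equivalent Lagrangian there, and the hypothesis ``$\E'$ is not variational'' yields no contradiction at all. This is exactly the ``bookkeeping at the even/odd interface'' you flag as the main obstacle; as set up, it cannot be closed.

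The repair is the one the paper uses, and it changes both the inductive statement and the step size. First, even $n$ requires no induction: for unparametrized curves there are $n-1$ dependent variables, so the symbol matrix $\bigl[\p EL_j/\p x^i_2\bigr]$ is skew-symmetric of odd size $n-1$, hence degenerate, and no $L$ can reproduce the third-order system \eqref{BE2} --- this is the $n=4$ argument of Section~\ref{S4D} verbatim. Second, for odd $n$ one strengthens the inductive hypothesis to: \emph{every} second-order Lagrangian affine in $2$-jets with $\op{ev}_{\E}EL_j\equiv0$ is a pure divergence. This is what is verified directly at $n=5$; being a statement about all solutions of the overdetermined split system rather than about the non-existence of an equivalent Lagrangian, it does apply to the possibly degenerate restriction $L_0$. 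Stepping down by two dimensions at a time keeps the parity, one can choose the codimension-two coordinate subspace so that a nonzero Pfaffian survives restriction, and the strengthened hypothesis then forces $L_0$ to be a divergence, contradicting that survival. With the inductive statement corrected and the two parity classes separated in this way, your outline becomes the paper's proof.
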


Let us make a remark on the above reduction procedure. We prove non-variationality via reduction of dimension by
two in both cases: for parametrized conformal geodesics dimensions drop $2n\mapsto 2n-2\mapsto\dots$,
for un-parametrized conformal geodesics dimensions drop $2n+1\mapsto 2n-1\mapsto\dots$ to keep nondegeneracy.
One may expect a dimension reduction by one through switching between the two cases. This is indeed possible for
the corresponding equations: for parametrized curves forget about parametrization and then restrict to a hyperplane;
for unparametrized curves choose the projective parametrer and then restrict to a hyperplane. However
in terms of Lagrangians the cases are distinguished by the property of the action $\int L\,dt$ 
being reparemetrization invariant and this does not change under restriction (of the Lagrangian density and
variations) to the equation of hypersurface. 
Thus reduction by one dimension seems not possible.

 \bigskip

{\bf 3. General non-flat case.} 
The above results imply that conformal geodesics in general dimension are not variational, as we shall now demonstrate. 
Let us first deduce the negative results for generic metrics $g$ from computations for flat metrics $g_0$.
By generic we mean such $g$ that its $k$-jets is generic at a reference point for specified $k$.
We omit large explicit formulae for equations, referring to the simplest representatives in the flat case, as this is not
essential for our non-computational argument.

We adapt the approach from the discussion in the previous subsection for un-parametrized case, 
without using nondegeneracy condition \eqref{open}; the variationality is given by a system of closed differential-algebraic
relations on the entry of the metric $g$ representing the conformal structure. 
Since these conditions fail for flat metric $ds^2_{\op{Eucl}}$ they also fail for metrics $g$ in an open 
neighborhood in some $C^k$ topology (actually, $k=5$ suffices, but we will not use it). 
Moreover, due to algebraicity of the conditions, the variationality fails for metrics $g$
in an open set in Zariski topology on $k$-jets of the metrics, whence for generic metrics.

In fact, we can obtain obstructions to variationality without any explicit integration of particular constraints, as in 
Section \ref{S4D}. The condition that $\delta_XL=0$ (without restriction to equation $\E$ of conformal geodesics) 
is equivalent to the condition of complete divergence $L\,dt=\hat{d}\psi$, where $\hat{d}$ is the horizontal differential 
(see \cite{IA,VK} for details on this via variational bicomplex/$\mathcal{C}$-spectral sequence).

Denote by $\Delta^0$ the equation of complete divergence, and by $\Delta^\E$ the Euler-Lagrange condition
on equation that we previously denoted $\op{ev}_\mathcal{E}EL$. Both are submanifolds obtained by prolongations
in the corresponding jet spaces (cf.\ \cite{VK} for details on jets and differential equations) and nested so: 
 \begin{equation}\label{Delta}
\Delta_k^0\subset\Delta_k^\E\subset J^k.
 \end{equation}
Both $\Delta_k^0,\Delta_k^\E$ are differential equations on components $f_i\in C^\infty(J^1(M^n,1))$ of the Lagrangian 
$L=f_0+\sum_{j=1}^nf_jx_2^j$, depending on 1-jets of curves in $M$, hence the space in the right-hand-side 
of \eqref{Delta} is the space of $k$-jets of functions $f_i$ on $J^1(M^n,1)$.
Note that $\Delta^0$ expresses triviality of the Euler-Lagrange equation for the above $L$, while $\Delta^\E$
expresses this triviality modulo $\E$.

Since PDEs in \eqref{Delta} are linear, the quotient of the corresponding jet-subbundles $\Delta_k^\E/\Delta_k$ 
is a bundle over $M$ co-filtered by $k$.
While equation $\Delta_k^\E$ is of infinite type, the corresponding quotient equation on $[f_i]$ is of finite type.
It means that there are only finitely many linearly independent jets of $f_i$ in the quotient $\Delta^\E/\Delta^0$,
and we call normal jets a basis of those.
Indeed, this finite type property we have observed for flat metrics and it does not change for general metrics, 
as it only depends on the symbol of the equation.

Actually, the equation $\Delta^\E$ has order 2 in $f_i$ and also order 2 in parameter metric $g$
(it enters through Christoffel symbols, Schouten tensor, etc).
Its third prolongation ($k=3+2$) allows to express all normal jets $[j^kf_i]\in \Delta_k^\E/\Delta_k^0$.
In other words, the matrix $B_g$ of the linear system on these normal jets is of maximal rank for $g=g_0$
being the flat metric. But then it is also of maximal rank for any metric $g$ with generic $k$-jet,
whence $[j^kf_i]\in\Delta_k^0$ and this means that $L$ is a complete divergence. 
 Thus we have proved:
 
 \begin{prop}
For generic conformal metric $[g]$ the equation of conformal geodesics on $(M,[g])$,
both in parametrized or unparametrized version, is not variational.
 \end{prop}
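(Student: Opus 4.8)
The plan is to synthesize the flat-case results of Steps 1 and 2 with a single rank computation, handling the parametrized and unparametrized versions uniformly through the complete-divergence characterization of triviality rather than through the nondegeneracy condition \eqref{open}, whose behaviour depends on the parity of $n$. I argue by contradiction. First I fix the candidate Lagrangian in Vainberg--Tonti normal form $L=f_0+\sum_j f_j x_2^j$, second order and affine in the $2$-jets, with $f_i\in C^\infty(J^1(M^n,1))$. Variationality of $\E$ would require that the Euler--Lagrange operator of $L$ vanish on $\E$, i.e.\ $\op{ev}_\E EL=0$, together with a nondegeneracy ensuring the resulting equation coincides with $\E$ as a submanifold. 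The key point is that the necessary condition $\op{ev}_\E EL=0$ can be studied in isolation: splitting it by the fibre jet-coordinates yields the overdetermined linear system $\Delta^\E$ on the jets of the $f_i$, which contains the complete-divergence equation $\Delta^0$ expressing $L\,dt=\hat d\psi$, the two being nested as in \eqref{Delta}.

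The heart of the argument is the finite-type property of the quotient $\Delta^\E/\Delta^0$. First I would check that the symbols of both $\Delta^0$ and $\Delta^\E$ coincide with those of the flat model: the metric $g$ enters only through lower-order data (Christoffel symbols, Schouten tensor), which are algebraic in the jets of $g$, so the symbol is $g$-independent. Hence the finite-type property already observed in the flat computation of Section \ref{S4D} persists for every $g$, and there are only finitely many normal jets. Bounding the prolongation order is then essential: since $\Delta^\E$ has order $2$ in the $f_i$ (and order $2$ in $g$), the third prolongation, at jet-order $k=5$, suffices to express all normal jets $[j^kf_i]\in\Delta_k^\E/\Delta_k^0$, so the quotient is governed by a single matrix $B_g$ whose entries are algebraic in the $k$-jet of $g$.

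The flat-case Propositions say precisely that for $g=g_0$ flat every solution of $\op{ev}_\E EL=0$ is a complete divergence, i.e.\ $\Delta^\E=\Delta^0$ at $g_0$; equivalently, $B_{g_0}$ has maximal rank. Because maximal rank is a Zariski-open condition and the entries of $B_g$ are algebraic in the jets of $g$, the matrix $B_g$ retains maximal rank for $g$ in a Zariski-open, hence generic, set of $k$-jets (and, in particular, in a $C^k$-open neighbourhood of $g_0$). For such $g$ one has $\Delta^\E=\Delta^0$, so any $L$ with $\op{ev}_\E EL=0$ is a complete divergence; its Euler--Lagrange equation is then identically zero and cannot coincide with the third-order equation $\E$. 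This contradiction establishes non-variationality in both the parametrized and unparametrized settings and proves the proposition.

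The main obstacle I anticipate is the finite-type verification together with the uniform prolongation bound: one must confirm that the symbol of $\Delta^\E$ really is $g$-independent and that the third prolongation genuinely exhausts the normal jets for \emph{all} $g$ near $g_0$, not merely at the flat point, so that a single matrix $B_g$ controls the quotient. Once this is in place, the passage from the flat value $g_0$ to generic $g$ reduces to the elementary statement that maximal rank of an algebraically-parametrized matrix is a generic condition.
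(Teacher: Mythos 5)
Your proposal follows essentially the same route as the paper: the nested linear systems $\Delta^0\subset\Delta^\E$ on the jets of the $f_i$, the finite-type property of the quotient (depending only on the $g$-independent symbol), the matrix $B_g$ of normal jets obtained from the third prolongation, maximal rank at the flat metric $g_0$, and Zariski-openness of the maximal-rank condition to conclude that for generic $g$ every solution of $\op{ev}_\E EL=0$ is a complete divergence. The one point you rightly flag --- that the flat-case input must be the stronger statement $\Delta^\E=\Delta^0$ at $g_0$ rather than mere non-existence of nondegenerate Lagrangians --- is exactly the modification the paper makes in its Step 2 discussion before invoking it here.
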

 
Let us make a remark on the usage of a reference point (where the $k$-jet is based) at which we evaluate the rank of $B_g$
in the above proof.
For (conformally) flat metric all points play the same role due to homogenuity. In general, there can be regular and singular 
points (with respect to stability of ranks of various tensors). 
The former are generic and we may always choose the reference point regular.
However we may also take any singular point, because the argument is based on comparisons of $B_g$ with $B_{g_0}$.
 
 \begin{rk}
Alternatively the above argument can be presented as follows. The condition that a (nonlinear) differential operator $F$ is
the Euler-Lagrange operator of some $L$ is given by the self-adjointness condition
(here $\ell_F$ is the operator of universal linearization of $F$ and $\ell_F^*$ is its formal adjoint \cite{V})
 $$
\ell_F^*=\ell_F.
 $$ 
The conformal geodesic equation $\E:F=0$ is equivalent to $\Box F$ 
for an invertible $C$-differential operator $\Box$ (we refer to \cite{VK} for the formalism); 
in our case the coefficients of $\Box$ can be assumed functions on $J^1(M,1)$.
Solvability of the differential equation $\ell_{\Box F}^*=\ell_{\Box F}$ on these coefficients is a finite type
system, which is not solvable for the parameter $g_0$ and hence not solvable for a generic metric $g$.
 \end{rk}

To conclude the proof we extend this negative result for all metrics $g$ in higher dimensions. 
The maximal rank condition for the matrix $B_g$ is metric invariant, i.e.\ if $g'=\varphi^*g$ for a diffeomorphism 
$\varphi:M\to M$ then $B_g$ and $B_g{}\!'$ have the same ranks in the corresponding
jet-points. Dropping of the rank is a metric invariant condition and hence it can be expressed 
via differential invariants of $g$ (absolute and relative, cf.\ \cite{O}).

Fixing a point $x\in M$, the differential group $D_x^{k+1}$ of $(k+1)$-jets of diffeomorphisms fixing $x$ 
naturally acts on the space of jets of metrics $g$ on $M$, which is an open subset in $J^k_x(M,S^2T^*M)$.
This action of $D_x^{k+1}$ is algebraic. 
Thus its only closed orbit is the orbit of the jet at $x$ of the flat metric $g_0$ (where the stabilizer subgroup is largest) 
and all other orbits are adjacent to it. Consequently, by lower semi-continuity of the rank,
for any metrics $g$ the rank of $B_g$ cannot be smaller than that of $g_0$, and
hence for any metric $g$ a solution $f_i$ of $\Delta^\E$ is a complete divergence. 
This finishes the proof of the main theorem.

\section{Discussion: on the selection principle}\label{conclusion}

Conformal geodesics extend naturally to (and through) conformal infinity. Their behavior at the conformal infinity 
uncovers the structure of asymptotically simple solutions of the Einstein field equations \cite{S,FS}. 
In infinity this family of curves carries only a projective family of parametrizations, which makes 
un-parametrized conformal geodesics at the three-dimensional spatial boundary physically meaningful.

Our main result states that the variationality of conformal geodesics is the selection principle for physical dimension 
three. 
In all other dimensions (including dimension two and parametrized version
in dimension three) the equation is not variational. Here we test only the first variation of conformal geodesics.
Indeed, even in the flat case, there exist infinitely many conformal geodesics through any two (sufficiently close) 
points and thus the second variation does not seem to be relevant.

Let us stress that we consider the standard approach to variational problem when we test whether the given
equation $\E$ is the Euler-Lagrange equation $\delta_XL=0$ of a certain Lagrangian $L$. The equation may be
considered geometrically as a submanifold $\E\subset J^k$ of some jet-space. Sometimes, in the literature, especially 
in investigation of integrability, the equation is exchanged to its intermediate integral or differential corollary. 
In geometric terms, this means either smaller or larger equation-submanifold in the space of jets:
 $$
\E_\vartriangle\subset \E\subset \E^\triangledown.
 $$

For instance, in the case of conformal geodesic equation $\E$ a nondegenerate conformal invariant Lagrangian 
of order 2 was proposed in \cite{BE}, which up to divergence coincides with the one investigated further in \cite{DK}. 
It was noted in \cite{BE} that its Euler-Lagrange equation is of forth order; this equation 
$\E^\triangledown\supset\E$ containing the equation of conformal geodesics 
was later studied in \cite{SZ,PZ}. An example of smaller equation $\E_\vartriangle$ (for K\"ahler metric $g$) 
is given by K\"ahler magnetic geodesics discussed in \cite{DK}. 

We note that, in general, variationality of equation $\E$ cannot be exchanged with that for $\E_\vartriangle$ or 
$\E^\triangledown$. Indeed, as known from math folklore, there exists a Lagrangian $L^\triangledown$ 
whose extremals contain all solutions of $\E$:

 \begin{prop}
Any equation $\E$ can be embedded into a variational equation $\E^\triangledown$. 
For the latter one can take the cotangent covering given by the equations $\E$ and $\ell_\E^*$
(formal adjoint of the linearization operator $\ell_\E$).
 \end{prop}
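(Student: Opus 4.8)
The plan is to produce $\E^\triangledown$ as the honest Euler--Lagrange system of a Lagrangian built tautologically from $\E$ by pairing the equation against a dual unknown. Writing $\E$ as $F=0$ for the (vector-valued) operator $F$ with universal linearization $\ell_\E=\ell_F$, I would adjoin a new tuple of dependent variables $p$ valued in the fibre dual to the target of $F$, and take the horizontal Lagrangian density
$$
\mathcal{L}=\langle p,F\rangle\,dt,
$$
where $\langle\cdot,\cdot\rangle$ is the natural contraction. The enlarged field space of pairs $(u,p)$ fibers over the original field space by $(u,p)\mapsto u$, and this projection will furnish the covering $\E^\triangledown\to\E$.

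The computational heart is to read off the two blocks of the Euler--Lagrange operator of $\mathcal{L}$. Because $\mathcal{L}$ is of order zero and linear in $p$, varying in $p$ gives at once
$$
\frac{\delta\mathcal{L}}{\delta p}=F,
$$
so the $p$-equations reproduce $\E$ verbatim. Varying in $u$ and integrating by parts through the Green--Lagrange identity
$$
\langle p,\ell_F(v)\rangle-\langle \ell_F^*(p),v\rangle=\hat{d}\,\omega(p,v),
$$
which is exactly the defining relation of the formal adjoint $\ell_\E^*=\ell_F^*$, I obtain
$$
\frac{\delta\mathcal{L}}{\delta u}=\ell_F^*(p).
$$
Thus the full Euler--Lagrange system of $\mathcal{L}$ is precisely the cotangent covering named in the statement,
$$
\E^\triangledown:\quad F=0,\qquad \ell_\E^*(p)=0,
$$
and, being the critical-point system of $\mathcal{L}$, it is variational by construction.

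It remains to record the embedding. The zero section $p=0$ sends any solution $u$ of $\E$ to a solution $(u,0)$ of $\E^\triangledown$, since $\ell_\E^*(0)=0$ by linearity; conversely the $p$-equation forces the $u$-component of every solution of $\E^\triangledown$ to solve $\E$, so projection and zero section identify $\E$ with the slice $\E^\triangledown\cap\{p=0\}$ and exhibit $\E^\triangledown$ as a (linear, cotangent) covering of $\E$ whose extremals contain all solutions of $\E$.

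There is no genuine obstacle here — the statement is folklore and the proof is bookkeeping of two Euler operators — and the only point requiring care is the middle step: that the $u$-variation yields $\ell_\E^*(p)$ on the nose, rather than merely up to a divergence. This is secured by the definition of the formal adjoint together with the fact that the Euler operator annihilates the horizontal exact term $\hat{d}\,\omega(p,v)$, so the boundary contribution drops out cleanly.
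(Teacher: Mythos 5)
Your construction is exactly the paper's: the Lagrangian $\langle p,F\rangle$ with an adjoined dual variable, whose Euler--Lagrange system is $F=0$ together with $\ell_F^*(p)=0$, i.e.\ the cotangent covering $\mathcal{T}^*\E$. The only cosmetic difference is that you verify variationality directly (the system \emph{is} the critical-point system of an explicit Lagrangian) whereas the paper also remarks that the linearization $\ell_\E\oplus\ell_\E^*$ is self-adjoint; both are correct and the argument is the same.
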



 \begin{proof}
Let the equation be given by $\E=\{F_i=0\}_{i=1}^m$, where $F_i$ are (nonlinear) differential operators on
$u=(u^j)_{j=1}^n$ (usually, for determined systems, $m=n$).  
Then $L^\triangledown=\sum_{i=1}^m v^iF_i$ for additional dependent variable $v=(v^i)_{i=1}^m$.
Indeed, the equations $\delta_{v^i}L^\triangledown=F_i=0$ for $1\leq i\leq n$ imply $\E$.
The other Euler-Lagrange equations $\delta_{u^j}L^\triangledown=\sum_{i=1}^nv^i\delta_{u^j}F_i=0$
should be considered as constraints on $v$. 
Taken together these two sets of equations define what is known as cotangent covering $\mathcal{T}^*\E$ over $\E$,
cf.\ \cite{VK}. Its linearization is given by the operator $\ell_\E\oplus\ell_\E^*$ that is self-adjoint 
and hence (see e.g.\ \cite{IA,V}) $\mathcal{T}^*\E$ is a variational equation.
 \end{proof}

Similarly, any differential equation $\E$ (with more than one-dimensional solution space) contains 
a reduction to scalar second order ODE (path geometry) that is variational (a result by Sonin of 1886, see \cite{KM}). 
In other words, under mild conditions, every equation $\E$ contains a variational intermediate integral $\E_\vartriangle$.

\smallskip

Our final remark concerns variations with prescribed boundary conditions. With the standard approach
one usually assumes fixed endpoint and derivatives of variations at it. Non-standard variational principle
for conformal geodesics in \cite{DK} restricts the variations $V$ along the curve by $V'(t_0)=0$, $V'(t_1)=0$
as well as the boundary condition $B(V)|_{t_0}^{t_1}=-K(V)$, where $B$ is a certain second order differential
operator. For the flat metric $K=0$. Consider, for instance, a one-parametric family of conformal circles in $\R^n$,
with center $(0,\tfrac{1-c^2}{2c},0,\dots,0)$ and radius $\tfrac{c^2+1}{2c}$,
passing through the points $(\pm1,0,\dots,0)$ for $t=\pm1$:
 $$
x_1(t)=\frac{(c^2+1)t}{t^2+c^2},\ x_2(t)=\frac{c(1-t^2)}{t^2+c^2},\ x_3(t)=\dots x_n(t)=0.
 $$ 
Then $B(V)=\frac1{1+c^2}\Bigl((c^2-t^2)V_1''-2ctV_2''-4tV_1'-4cV_2'-2V_1\Bigr)$ and hence 
 $$
\bigl.B(V)\bigr|_{t=-1}^{t=+1}=\frac{c^2-1}{c^2+1}\bigl(V_1''(1)-V_1''(-1)\bigr)
-\frac{2c}{c^2+1}\bigl(V_2''(1)+V_2''(-1)\bigr)-\frac2{c^2+1}\bigl(V_1(1)-V_1(-1)\bigr).
 $$
Thus the allowed variations depend on a particular extremal, or more generally a curve in $M$.
Hence the boundary conditions are selective for a subset of all solutions of $\E$ in this case.

This shows some differences between the selection principle of our work with the previous publications
on variational principles for conformal geodesics.

\smallskip

 \begin{rk}
Conformal geodesics are distinguished curves for conformal geometry.
More general distinguished curves in parabolic geometries include: 
chains in CR geometry (variational \cite{CMMM}) and canonical curves in Lagrangian contact structures 
(also variational \cite{MFMZ}), geodesics in projective geometry (generically non-variational \cite{KM})
and paths for systems of second order ODEs (also generically non-variational \cite{KM}).
This raises the question which distinguished curves in parabolic geometries are variational.
 \end{rk}

Let us finish with a  historical remark. The notion of conformal geodesics has appeared in literature
in at least three different contexts. The first corresponds to the title of Fialkov's paper \cite{F}, where he mentions
the preceeding work by Schouten of 1928, and there conformal geodesics are defined as a family of curves,
more precisely a path structure that is metrizable within the given conformal class. No particular curve from
this family can be characterized by a relation to the conformal structure. Indeed, it was shown in Lemma 2 of \cite{EZ}
that any curve is locally a geodesic for a metric in a given conformal class. 

Next, conformal geodesics were introduced as extremals for the so-called conformal arclength functional
by Musso in \cite{M} and then extended in \cite{MMR} (this action corresponds to the simplest differential invariant one-form). 
These curves are variational by construction, but their Euler-Lagrange equation is of order six. 

Finally, our usage of conformal geodesics is related to \cite{FS}, which is based on the work by Schmidt of 1986 with a reference 
to the preceeding paper by Ogiue of 1967; it was re-used in physical literature by Tod in 2012 and later, see references in \cite{DK}. 
In mathematical literature this equation was introduced in \cite{Y} in 1957; later in \cite{BE} it was more conservatively called 
the equation of conformal circles. Given the main result of this paper it may be questioned which nomenclarure is more justified.

\bigskip

{\bf Acknowledgment.} I would like to thank Vladimir Matveev, Wijnand Steneker, Eugene Ferapontov, 
Iosif Krasilshchik, Konstantin Druzhkov, Michael Eastwood and Josef Silhan for useful discussions. 
The research was supported by the Tromsø Research Foundation 
(project “Pure Mathematics in Norway”) and the UiT Aurora project MASCOT.



\begin{thebibliography}{50}

\bibitem{IA}
I.\,M. Anderson, {\it The Variational Bicomplex}, book preprint, technical report of the Utah State University
(1989).

\bibitem{An}
I.\,M. Anderson, G.\ Thompson, {\it The inverse problem of the calculus of variations for ordinary differential equations},
Mem. Amer. Math. Soc. {\bf 98}, no. 473 (1992).

\bibitem{BE}
T.\,N.\ Bailey, M.\,G.\ Eastwood, {\it Conformal circles and parametrizations of curves in conformal manifolds},
Proc.\ Amer.\ Math.\ Soc.\ {\bf 108}, 215--221 (1990).

\bibitem{BF1}
M.\ Barros, A.\ Ferr\'andez, {\it A conformal variational approach for helices in
nature}, J.\ Math.\ Phys.\ {\bf 50}, no.10, 103529 (2009).

\bibitem{CMMM}
J.-H.\ Cheng, T.\ Marugame, V.\,S.\ Matveev, R.\ Montgomery, {\it Chains in CR geometry as
geodesics of a Kropina metric}, Adv.\ Math.\ {\bf 350}, 973--999 (2019).

\bibitem{Dav}
D.\ R.\ Davis, {\it The inverse problem of the calculus of variations in a space of $n+1$ dimensions},
Bull. Amer.\ Math.\ Soc.\ {\bf 35}, 371--380 (1929).

\bibitem{D}
J.\ Douglas, {\it Solution of the inverse problem of the calculus of variations},
Transactions Amer.\ Math.\ Soc.\ {\bf 50}, 71--128 (1941).

\bibitem{DK}
M.\ Dunajski, W.\ Krynski, {\it Variational principles for conformal geodesics},
Lett.\ Math.\ Phys.\ {\bf 111}, 127 (2021).

\bibitem{EZ}
M.\ Eastwood, L.\ Zalabová, {\it Special metrics and scales in parabolic geometry}, 
Ann.\ Glob.\ Anal.\ Geom.\ {\bf 62}, 635--659 (2022). 

\bibitem{F}
A.\ Fialkow, {\it Conformal Geodesics}, Trans.\ A.M.S.\ {\bf 45} (3), 443--473 (1939).

\bibitem{FS}
H.\ Friedrich, B.\,G.\ Schmid, {\it Conformal Geodesics in General Relativity},
Proc.\ Royal Soc.\ London, Ser.\ A {\bf 414}, no. 1846, 171--195 (1987).

 

\bibitem{KM}
B.\ Kruglikov, V.\,S.\ Matveev, {\it Almost every path structure is not variational},
Gen.\ Relativ.\ Gravit.\  {\bf 54}, 121 (2022).

\bibitem{KMS}
B.\ Kruglikov, V.\,S.\ Matveev, W.\ Steneker, {\it Variationality of Conformal Geodesics in dimension 3}, 
Anal.\ Math.\ Phys.\ {\bf 15}, 123 (2025).

\bibitem{KSS}
B.\ Kruglikov, E.\ Schneider, W.\ Steneker, {\it On globally invariant Euler--Lagrange equations for curves}, 
in preparation (2025).

\bibitem{Krp}
D.\ Krupka, {\it The Vainberg--Tonti Lagrangian and the Euler--Lagrange mapping},
Differential Geometric Methods in Mechanics and Field Theory, Eds: F. Cantrijn, B. Langerock,
Gent Academia Press, 81--90 (2007).

\bibitem{MFMZ}
T.\ Ma, K.\ Flood, V.\,S.\ Matveev, V.\ Žádník, {\it Canonical curves and Kropina metrics in Lagrangian contact geometry},
Nonlinearity {\bf 37}, 015007 (2024).

\bibitem{MMR}
M.\ Magliaro, L.\ Mari, M.\ Rigoli, {\it On the geometry of curves and conformal geodesics
in the Möbius space}, Ann.\ Glob.\ Anal.\ Geom.\ {\bf 40}, 133--165 (2011).

\bibitem{TM}
T.\ Marugame, {\it The Fefferman metric for twistor CR manifolds and conformal geodesics in dimension three},
arXiv:2411.18961 (2024).

\bibitem{M}
E.\ Musso, {\it The conformal arclength functional}, Math.\ Nachr.\ {\bf 165}, 107--131 (1994).

\bibitem{PZ}
P.\ Plansangkate, L.\ Zalabova, {\it Conserved quantities of distinguished curves on conformal sphere},
arXiv:2503.23442 (2025).

\bibitem{O}
P. Olver, {\it Equivalence, Invariants, and Symmetry}, Cambridge University Press (1995).

\bibitem{S}
B.\,G.\ Schmidt, {\it A new definition of conformal and projective infinity of space-times},
Commun.\ Math.\ Phys.\ {\bf 36}, 73--90 (1974).

\bibitem{SZ}
J.\ Šilhan, V.\ Žádník, {\it Conformal theory of curves with tractors},
J.\ Math.\ Anal.\ Appl.\ {\bf 473}, 112--140  (2019).

\bibitem{V}
A.\,M.\ Vinogradov, {\it The C-spectral sequence, Lagrangian formalism and conservation laws},
I \& II, J.\ Math.\ Anal.\ Appl.\ {\bf 100}, 1--129 (1984).

\bibitem{VK}
A.\,M.\ Vinogradov, I.\,S.\ Krasil’shchik, eds.\
{\it Symmetries and conservation laws for differential equations of mathematical
physics}, Transl.\ Math.\ Monograph {\bf 182}, A.M.S., Providence (1999).

\bibitem{Y}
K.\ Yano, {\it The theory of Lie derivatives and its applications}, North-Holland Publishing Co, Amsterdam (1957).

 \end{thebibliography}
\end{document}